\newtheorem{thm}{Theorem}[section]
\newtheorem{lem}[thm]{Lemma}
\newcommand{\RR}{{\mathbb R}}
\author[a,b]{Ning Bao,}
\author[b]{ChunJun Cao,}
\author[c]{Michael Walter,}
\author[a,b]{Zitao Wang}
\affiliation[a]{Institute for Quantum Information and Matter, California Institute of Technology, Pasadena, CA 91125, USA}
\affiliation[b]{Walter Burke Institute for Theoretical Physics, California Institute of Technology, Pasadena, CA 91125, USA}
\affiliation[c]{Stanford Institute for Theoretical Physics, Stanford University, Stanford, CA 94305, USA}
\emailAdd{ningbao@theory.caltech.edu}
\emailAdd{cjcao@caltech.edu}
\emailAdd{michael.walter@stanford.edu}
\emailAdd{zwang@caltech.edu}
\abstract{We extend our studies of holographic
entropy inequalities to gapped phases of matter.
For any number of regions, we determine the linear entropy inequalities satisfied by systems in which the entanglement entropy satisfies an exact area law.
In particular, we find that all holographic entropy inequalities are valid in such systems.
In gapped systems with topological order, the ``cyclic inequalities'' derived recently for the holographic entanglement entropy generalize the Kitaev-Preskill formula for the topological entanglement entropy.
Finally, we propose a candidate linear inequality for general 4-party quantum states.}
\begin{document}

\title{Holographic entropy inequalities and gapped phases of matter}

\maketitle
\flushbottom

\section{Introduction}
\label{sec:intro}
 In recent years, the study of quantum entanglement and quantum information in general has produced a myriad of applications in high energy physics and condensed matter physics. A key tool for the quantification of entanglement is, in particular, the entanglement entropy. For general quantum systems, the von Neumann entanglement entropies of subsystems are known to obey subadditivity, the Araki-Lieb inequalities, weak monotonicity, and  strong subadditivity, respectively:

\medskip

\noindent
Subadditivity:
\begin{equation}
S(A)+S(B)-S(AB)\geq  0.
\label{eqn:SA}
\end{equation}
Araki-Lieb:
\begin{equation}
S(C)+S(ABC)-S(AB)\geq 0.
\label{eqn:AL}
\end{equation}
Weak Monotonicity:
\begin{equation}
S(AB)+S(AC)-S(B)-S(C)\geq 0.
\label{eqn:WM}
\end{equation}
Strong Subadditivity:
\begin{equation}
S(AB)+S(AC)-S(A)-S(ABC)\geq 0.
\label{eqn:SSA}
\end{equation}

Such inequalities are important, as they constrain the phase space of entanglement in quantum systems and can in turn be translated into other physical quantities. In particular, in condensed matter physics there exists a conjectured relationship \cite{AreaLawMInfo,Hastings} between the existence of a gap in a system and whether or not the entanglement entropy in that system obeys an area law \cite{AreaLawReview}. It therefore seems a fruitful direction to explore and better characterize the properties of the entanglement entropy in quantum-mechanical systems.

It should be noted, however, that entropy inequalites for general quantum states are relatively rare; indeed, \eqref{eqn:SA}--\eqref{eqn:SSA} are the only unconditional entropy inequalities known to date.
Luckily, there exist classes of quantum systems for which the entanglement entropy is easier to characterize.

\subsection{Entropy inequalities from holography}
In holography, it has been shown that entanglement entropies of regions on the boundary are equal to the areas of the minimal surfaces subtending the boundary region, or in terms of the celebrated Ryu-Takayanagi formula \cite{RT1,RT2}:
\begin{equation}
S(A)=\frac{\text{Area}}{4G}.
\end{equation}

The Ryu-Takayanagi formula gives us a powerful new tool for computing entanglement entropies in regimes where such calculations are usually intractible. In higher dimensions, for example, it turns what would be a difficult (if not impossible) conformal field theory calculation into a straightforward minimization of area in a classical metric.

Interestingly, these holographic entenglement entropies obey a larger set of inequalities than those obeyed by the generic quantum mechanical systems. In \cite{HHM}, it was discovered that, indeed, there is a new entanglement entropy inequality which is true for all systems with semi-classical holographic duals, i.e., the conditional mutual information is monogamous, or
\begin{equation}
S(AB)+S(AC)+S(BC)\geq S(A)+S(B)+S(C)+S(ABC).
\end{equation}

This was done using a method known as inclusion/exclusion, in which minimal surfaces corresponding to the (positively signed) entropic terms on the left hand side are repartitioned into non-minimal surfaces corresponding to the terms on the right hand side. As non-minimal surfaces have more area than minimal surfaces, if such a partitioning can be done, then the inequality is true. A more detailed description of the methodology is available in \cite{HHM}. Recently, this has also been generalized to higher numbers of regions in \cite{BNOSSW} by converting the geometric procedure described above to a combinatoric set of contraction mappings from points on a hypercube, which corresponds to the left hand side entanglement entropies, to points on another hypercube, which corresponds to the right hand side entropies. This new method has yielded a new, infinite family of inequalities that has been proven for holographic systems. These so-called ``cyclic'' inequalities for $n=2k+1$ subsystems take the form
\begin{align}
\text{CYC}&=\sum_{l=1}^{k-1}I(A_1...A_l:A_{k+l+1}:A_{k+l+2}...A_{2k+1})
-\sum_{j=1}^{k}I(A_1...A_j:A_{j+1}...A_{j+k}:A_{j+k+1})\nonumber\\
&=\sum_{i=1}^n S(A_i| A_{i+1}\dots A_{i+k}) - S(A_1 \dots A_n)\geq 0,
\label{eqn:cycineq}
\end{align}
where  $I(A:B:C)=S(A)+S(B)+S(C)-S(AB)-S(AC)-S(BC)+S(ABC)$ and $S(X\vert Y) = S(XY)-S(Y)$ is the conditional entropy.
This new generalization has also led to the discovery of several further holographic entropy inequalities \cite{BNOSSW}.

Holographic systems are not the only class of quantum systems that obey a more restrictive set of entanglement entropy inequalities. The set of stabilizer states in quantum error correction does so, as well \cite{Stabilizer,gross2013stabilizer,walter_2014}. It is interesting, however, that the known stablizer inequalities are implied by (weaker than) the holographic inequalities, thus suggesting an nontrivial relationship between these two types of states.

It is important to note that the utilization of holography in the inclusion-exclusion proof technique is actually quite minimal. Instead, holographic entropy inequalities are reduced to linear inequalities between the areas of boundaries of certain bulk regions. However, these inequalities are then proved for arbitrary bulk regions, not only for those that minimize the Ryu-Takayanagi entropy. As suggested in \cite{BNOSSW}, it is therefore natural to expect that they hold likewise in condensed matter systems that satisfy an area law. The exploration of this idea, and the applicability of the resulting inequalities for other condensed matter systems, will be the focus of the remainder of this work. As any system with an exact area law entropy scaling necessarily satisfies these entropy inequalities, they may also provide further indication, particularly in the direction of falsification, as to whether a gapped system indeed implies area law scaling for entanglement entropy.

Another potentially interesting relationship here is to the field of AdS/CMT \cite{Hartnoll,McGreevy}; as we will see, the entanglement entropies for gapped phases of matter with an exact area scaling obey the constraints of general holographic systems, which is suggestive of possibly nontrivial holographic duals of condensed matter systems.

\subsection{Organization}
In this work we extend and characterize the realm of applicability 
of the holographic entanglement entropy inequalities to condensed matter systems.
The organization of the paper will be as follows:
In section \ref{sec:trivialtopo}, we formally prove the validity of the cyclic inequalities for systems that obey an exact area law.
In section \ref{sec:TEE}, we show that these inequalities have a valid interpretation as the topological entanglement entropy in two spatial dimensions.
In section \ref{sec:entropy cone}, we fully characterize the entanglement entropy in systems satisfying an exact area law, and we give a minimal and complete set of entropy inequalities and equalities for any fixed number of regions.
We comment about the analogous problem for general quantum system and propose a candidate inequality for four-partite quantum systems.
Finally, we conclude in section \ref{sec:conclusion}.

\section{Gapped systems with trivial topological order}
\label{sec:trivialtopo}

Here we consider gapped systems with trivial topological order in d+1 dimensions. The entanglement entropy $S(A)$ of a subsystem $A$, which measures the entanglement  between $A$ and its complement $A^c$, is defined to be the von Neumann entropy
\begin{equation}
S(A) \equiv -\operatorname{tr} \rho_A \log \rho_A,
\end{equation}
where $\rho_A$ is the reduced density operator for $A$ obtained by tracing out all degrees of freedom outside of $A$ in the many-body ground state of the gapped system.

Note that for a bipartite system in a pure state, the reduced density operators obtained by tracing out either part have the same set of eigenvalues, hence the same von Neumann entropy, which can be seen via a Schmidt decomposition \cite{PreskillLecture} of the pure state that we started with.
This implies that for any division of the system into subsystems $A$ and its complement $A^c$,
\begin{equation}
S(A) = S(A^c)
\label{complement}
\end{equation}
is satisfied.

For gapped systems with trivial topological order, we assume that the entanglement entropy $S(A)$ of a subsystem A scales as the area of its boundary and neglect any sub-area scaling for the moment. Namely,
\begin{equation}\label{eqn:StrictAreaLaw}
S(A)\sim \partial A.
\end{equation}

Note that for any two regions $A$ and $B$ in $d$ spatial dimensions that are non-overlapping except possibly at their boundaries, the entanglement entropy of the combined region $AB$ satisfies
\begin{equation}
S(AB) \sim  \partial(AB) = \partial A + \partial B - 2A\cap B.
\end{equation}
Here $A\cap B$ denotes the area of the codimension 1 hypersurface where regions $A$ and $B$ intersect. The above follows from \eqref{eqn:StrictAreaLaw} because for any $d$-dimensional regions $A$, $B$, and $C$, the triple intersection $A\cap B \cap C$ is of measure zero.

We claim that the cyclic inequality \eqref{eqn:cycineq} for $n = 2k+1$ regions is satisfied as a strict equality in this system, namely
\begin{equation}
\sum_{i=1}^n S(A_i| A_{i+1}\dots A_{i+k}) = S(A_1 \dots A_n),
\label{eqn:arealaw}
\end{equation}
where the sum is cyclic and all indices are taken modulo $n$.
To prove \eqref{eqn:arealaw}, we compute:
\begin{align*}
 \text{LHS} \sim& \sum_{i=1}^n \partial (A_i\dots A_{i+k}) - \partial (A_{i+1}\dots A_{i+k}) \nonumber \\
=&\sum_{i=1}^n \bigg\{ \partial A_i -\sum_{\substack{\{j_1,j_2\} \\ j_1,j_2\in \mathcal{Q}_0}} 2A_{j_1} \cap A_{j_2}
-\sum_{\substack{\{j_1,j_2\} \\ j_1,j_2\in \mathcal{Q}_1}} 2A_{j_1} \cap A_{j_2}\bigg\} \nonumber \\
=& \partial A_1 + \dots +\partial A_n - \sum_{i=1}^n \sum_{ j \in \mathcal{Q}_0} 2A_{i} \cap A_{j}, \nonumber \\
\text{RHS} \sim& \partial(A_1 \dots A_n) \nonumber \\
=& \partial A_1 + \dots +\partial A_n -\sum_{\substack{\{j_1,j_2\} \\ j_1,j_2\in \mathcal{N}}} 2A_{j_1} \cap A_{j_2},
\end{align*}
where $\mathcal{Q}_l= \{i+l,\dots, i+k\}$, $\mathcal{N}=\{1,\dots,n\}$ and $\{j_1,j_2\}$ are unordered pairs that take on the indicated values.
Hence to prove that $\text{LHS} = \text{RHS}$, we need to show that
\begin{align}
\sum_{\substack{\{j_1,j_2\} \\ j_1,j_2\in \mathcal{N}}} A_{j_1} \cap A_{j_2} = \sum_{i=1}^n \sum_{ j \in \mathcal{Q}_0} A_{i} \cap A_{j}.
\label{eqn:intersecting}
\end{align}
The $\text{LHS}$ sums over $n(n+1)/2 = (k+1)(2k+1)$ distinct unordered pairs of indices $\{j_1,j_2\}$, so it contains $(k+1)(2k+1)$ distinct terms.
The summation on the $\text{RHS}$ also contains $(k+1)(2k+1)$ terms. So in order to prove that $\text{LHS} = \text{RHS}$, it suffices to prove that any term appearing in the summation on the $\text{LHS}$ also appears in the summation on the $\text{RHS}$. This is equivalent to proving that for any unordered pair $\{j_1,j_2\}$, $j_1,j_2 \in \{1,\dots, n \}$, there exists $i \in \{1,\dots,n \}, j \in \{i,\dots,i+k \}$, such that $\{j_1,j_2\} = \{i,j\}$.

We have the following two cases:
\begin{enumerate}
\item $j_2 \in \{ j_1,\dots,j_1+k\}$. In this case, we just take $i=j_1, j=j_2$.
\item $j_2 \notin \{ j_1,\dots,j_1+k\}$. In this case, if $j_1 \notin \{j_2,\dots,j_2+k\}$, then $\{j_1,\dots,j_1+k\} \cap \{j_2,\dots,j_2+k\} = \emptyset$, otherwise $j_1+h_1 = j_2+h_2$, for some $h_1,h_2 \in \{1,\dots,k\}$. Therefore, either $j_1 = j_2 + h_2 -h_1$, or $j_2 = j_1 +h_1-h_2$. Since either $h_1-h_2 \in \{1,\dots,k\}$, or $h_2-h_1 \in \{1,\dots,k\}$, we are forced to conclude that either $j_2 \in \{ j_1,\dots,j_1+k\}$, or $j_1 \in \{j_2,\dots,j_2+k\}$. Both lead to contradictions. So if $j_2 \notin \{ j_1,\dots,j_1+k\}$, and if $j_1 \notin \{j_2,\dots,j_2+k\}$, then $\{j_1,\dots,j_1+k\} \cap \{j_2,\dots,j_2+k\} = \emptyset$. Since each set contains $k+1$ distinct numbers, and if their intersection is empty, their union would contain $2k+2$ distinct numbers, contradicting $n=2k+1$. Thus we finally arrive at the conclusion that $j_2 \notin \{ j_1,\dots,j_1+k\} \implies j_1 \in \{j_2,\dots,j_2+k\}$. In this case, we just take $i = j_2, j = j_1$.
\end{enumerate}
Combining cases 1 and 2, we conclude that for any unordered pair $\{j_1,j_2\}$, $j_1,j_2 \in \{1,\dots, n \}$, there exists $i \in \{1,\dots,n \}, j \in \{i,\dots,i+k \}$, such that $\{j_1,j_2\} = \{i,j\}$. Hence \eqref{eqn:intersecting} indeed holds
and \eqref{eqn:arealaw} is exactly satisfied by such systems.

In section~\ref{sec:entropy cone} below, we will generalize this result and identify \emph{all} entropy inequalities and equalities that are obeyed in systems with an exact area-law scaling.
We will find that any holographic entropy inequality is valid for systems with an exact area law.

\section{Topological entanglement entropy}
\label{sec:TEE}
\subsection{Construction and validity}
\label{subsec:generalconst}

Here we consider general gapped systems in $2+1$ dimensions. It is shown in \cite{TEEKitaevPreskill,TEEWenLevin,3dTEE} that the entanglement entropy of a region $A$ with smooth boundary has the form

\begin{equation}
S_A = \alpha L - b_0\gamma
\label{enscaling}
\end{equation}
in the limit $L/a \rightarrow \infty$ where $a$ is the correlation length. Here $b_0$ denotes the number of connected components of $\partial A$, the boundary of region $A$. The topological entanglement entropy $-\gamma$ is a universal constant characterizing the topological state and $\alpha$ is a non-universal and ultraviolet divergent coefficient dependent on the short wavelength modes near the boundary of region $A$. In particular, $\gamma=\log \mathcal{D}$ captures the far-IR behaviour of entanglement and the total quantum dimension, $\mathcal{D}$, which can be obtained from topological quantum field theory computations, is related to the number of superselection sectors of the system\cite{TEEKitaevPreskill}.

We divide the plane into $2k+2$ regions, labeled by $A_0,A_1,A_2,\dots,A_{2k+1}$, where $A_0$ labels the complement of $A_1A_2\dots A_{2k+1}$, i.e., $A_0\equiv (A_1A_2\dots A_{2k+1})^c$. In order for the topological entropy $S_{\text{topo}}$ defined in \eqref{tee} to be a topological invariant, we require the division of the plane to satisfy
\begin{equation}
\bigcap_{i\in I} A_i = \emptyset, \ \text{for all} \ I \subset \{0,1,2,\cdots,2k+1 \}, \text{such that} \ 0 \in I, \text{and} \ \lvert I \rvert > 3.
\label{cond}
\end{equation}
In other words, there is no point on the plane that is shared by $A_0$ and more than two other regions.
We define the topological entropy $S_{\text{topo}}$ for $2k+2$ regions as

\begin{equation}
S_{\text{topo}} \equiv \sum_{i=1}^{2k+1} S(A_i| A_{i+1}\dots A_{i+k}) - S(A_1 \dots A_{2k+1}),
\label{tee}
\end{equation}
where all indices are taken modulo $(2k+1)$. Note that our definition of the topological entropy reduces to the Kitaev-Preskill one \cite{TEEKitaevPreskill} when $k=1$ (i.e., a division of the plane into $4$ regions). Also note that our calculation in section \ref{sec:trivialtopo} implies that for gapped systems with trivial topological order, $S_{\text{topo}}=0$, that is, the dependence of $S_{\text{topo}}$ on the length of the boundaries cancels out.

To see that $S_{\text{topo}}$ is a topological invariant, consider deforming the boundary between regions labeled by the index set $J \subset \{0, 1,2,\dots,2k+1\}$, i.e., points in the set
\begin{equation*}
S \equiv \bigcap_{j \in J} A_j.
\end{equation*}

We note the following properties of the entanglement entropy before proceeding to the main arguments:
\begin{itemize}
\item Under general deformations of the plane, the change in the entanglement entropy of any region $A$, $\Delta S(A)$, is equal to the change in the entanglement entropy of the complement of $A$, $\Delta S(A^c)$. This follows as a consequence of \eqref{complement}.

\item For any $k\notin J$, points of deformation (points in $S$) are far from $A_k$. Therefore, we expect $\Delta S(A_k)=0$, provided that all regions are large compared to the correlation length. In the same spirit, $\Delta S(A)=0$ for any region $A$ that is a union of such $A_k$'s. It then follows if $A$ is appended to any region $B$, the change in entanglement of that region is unaffected, namely, $\Delta S(B\cup A)=\Delta S(B) $.

\end{itemize}

Now we argue for the topological invariance of $S_{\text{topo}}$. Possible deformations of the regions are classified into the following two cases:
\begin{enumerate}
\item \label{case1} $0 \notin J$. In this case,
\begin{align*}
\Delta S_{\text{topo}} &= \sum_{i=1}^{2k+1} [\Delta S(A_i A_{i+1}\dots A_{i+k}) - \Delta S(A_{i+1}\dots A_{i+k})]- \Delta S(A_1 \dots A_{2k+1}) \nonumber \\
&= \sum_{i=1}^{2k+1} [\Delta S((A_i A_{i+1}\dots A_{i+k})^c)- \Delta S(A_{i+1}\dots A_{i+k})] -  \Delta S((A_1 \dots A_{2k+1})^c) \nonumber \\
&= \sum_{i=1}^{2k+1} [\Delta S(A_0A_{i+1+k} A_{i+2+k}\dots A_{i+2k})- \Delta S(A_{i+1}\dots A_{i+k})] -  \Delta S(A_0) \nonumber \\
&=\sum_{i=1}^{2k+1} \Delta S(A_{i+1+k} A_{i+2+k}\dots A_{i+2k})- \sum_{i=1}^{2k+1} \Delta S(A_{i+1}\dots A_{i+k}) \nonumber \\
&=\sum_{i=1}^{2k+1} \Delta S(A_{i+1} A_{i+2}\dots A_{i+k})- \sum_{i=1}^{2k+1} \Delta S(A_{i+1}\dots A_{i+k}) = 0,
\end{align*}
where in the last step, we cyclically left permute the summands in the first summation by $k$ steps, which leaves the summation invariant.

\item $0 \in J$. In this case, by  condition\eqref{cond}, we can either have $\lvert J \rvert = 2$ or $\lvert J \rvert = 3$.

\begin{enumerate}
\item \label{case2(a)} $\lvert J \rvert = 2$. Denote the only nonzero element in $J$ as $j$.
In this case,
\begin{align*}
\Delta S_{\text{topo}} =& \Delta S(A_j A_{j+1} \dots A_{j+k}) + \Delta S(A_{j-1} A_{j} \dots A_{j+k-1}) + \cdots \nonumber \\
& + \Delta S(A_{j-k} A_{j-k+1} \dots A_{j}) - \Delta S(A_j A_{j+1} \dots A_{j+k-1}) \nonumber \\
& - \Delta S(A_{j-1} A_{j} \dots A_{j+k-2}) - \cdots - \Delta S(A_{j-k+1} A_{j-k+2} \dots A_{j}) \nonumber \\
& -\Delta S(A_1A_2\dots A_{2k+1}) \nonumber \\
=& (k+1)\Delta S(A_j) - k\Delta S(A_j) -\Delta S(A_j) = 0.
\end{align*}

\item \label{case2(b)} $\lvert J \rvert = 3$. Denote the nonzero elements in $J$ as $j_1$ and $j_2$. Moreover, since $j_1, j_2 \in \{1,2,\cdots, 2k+1 \}$,  $|j_1-j_2|\leq k$, i.e., they are separated by a distance of at most $k$ (note $j_1, j_2$ are $\text{mod} \ (2k+1)$ integers). Without loss of generality, we assume $j_2 = j_1 + l$, for some $0 < l \leq k$. We further write $j_1$ as $j$ for simplicity.
In this case,
\begin{align*}
\Delta S_{\text{topo}} =& \Delta S(A_j \dots A_{j+l} \dots A_{j+k}) + \cdots +\Delta S(A_{j+l-k} \dots A_{j} \dots A_{j+l}) \nonumber \\
&+\Delta S(A_{j+l-k-1}\dots A_{j} \dots A_{j+l-1}) + \cdots + \Delta S(A_{j-k}A_{j-k+1}\dots A_{j}) \nonumber \\
&+ \Delta S(A_{j+l} A_{j+l+1}\dots A_{j+l+k}) + \cdots + \Delta S(A_{j+1} \dots A_{j+l} \dots A_{j+k+1}) \nonumber \\
&- \Delta S(A_j \dots A_{j+l} \dots A_{j+k-1}) - \cdots -\Delta S(A_{j+l-k+1} \dots A_{j} \dots A_{j+l}) \nonumber \\
&-\Delta S(A_{j+l-k}\dots A_{j} \dots A_{j+l-1})  - \cdots - \Delta S(A_{j-k+1}A_{j-k+2}\dots A_{j}) \nonumber \\
&- \Delta S(A_{j+l} A_{j+l+1}\dots A_{j+l+k-1}) - \cdots - \Delta S(A_{j+1} \dots A_{j+l} \dots A_{j+k}) \nonumber \\
&- \Delta S(A_1A_2\dots A_{2k+1}) \nonumber \\
=& (k-l+1)\Delta S(A_j A_{j+l}) + l\Delta S(A_j) + l\Delta S(A_{j+l}) \nonumber \\
&- (k-l)\Delta S(A_j A_{j+l})- l\Delta S(A_j) - l\Delta S(A_{j+l}) - \Delta S(A_jA_{j+l})=0.
\end{align*}
\end{enumerate}
\end{enumerate}

To see that $S_{\text{topo}}$ is a universal quantity, we consider the same argument in \cite{TEEKitaevPreskill}, where a smooth deformation of the local Hamiltonian during which no quantum critical points are encountered.
Since the Hamiltonian is local, any smooth deformations of the Hamiltonian can be written as a sum of smooth deformations of local terms. Moreover, by utilizing the topological invariance of $S_{\text{topo}}$, we may deform the regions in the following ways while keeping $S_{\text{topo}}$ invariant:
\begin{itemize}
\item Stretch the boundaries of the regions so that $L \rightarrow \infty$, and the entanglement entropy of a region takes the form of \eqref{enscaling}.
\item Deform the boundaries of the regions so that all deformation of the Hamiltonian happens locally in the bulk of the regions.
\end{itemize}
We further assume that the correlation length remains small compared to the size of the regions throughout the deformation. Hence any local deformations of the Hamiltonian in the bulk only has miniscule effects for the ground state near the boundary.
As a result, the entanglement entropies of the deformed regions (and hence $S_{\text{topo}}$) should not be affected by such local deformations of the Hamiltonian.

Thus we conclude that the topological entropy we defined in \eqref{tee} is both a topological invariant (invariant under deformations of the boundary of the regions that keep the topology of the regions unchanged) and a universal quantity (invariant under smooth deformations of the Hamiltonian during which no quantum critical points are encountered).

For a general division of the plane into $2k+2$ regions that satisfies condition \eqref{cond}, we can compute the topological entropy:
\begin{align}
S_{\text{topo}} &=  -\gamma\Big\{\sum_{i=1}^{2k+1}\Big(b_0[\partial (A_i\dots A_{i+k})] - b_0[\partial (A_{i+1}\dots A_{i+k})]\Big) + b_0[\partial (A_1\dots A_{2k+1})]\Big\},
\label{eqn:generalreg}
\end{align}
where all indices are taken modulo $(2k+1)$, and $b_0[\partial A]$ denotes the zeroth Betti number (the number of connected components) of the boundary of a region $A$.

Hence $S_{\text{topo}}$ is proportional to the topological entanglement entropy $-\gamma$, with the proportionality constant determined by the topology of the regions. This implies that we can extract the topological entanglement entropy of a $2+1$ dimensional topologically ordered system with a mass gap by suitably divide the system into $2k+2$ regions, and compute the topological entropy $S_{\text{topo}}$.

\subsection{Examples}

Here we consider a few examples which elucidate some of the general constructions in section~\ref{subsec:generalconst}. Figures~\ref{fig:pie} and \ref{fig:piehole} illustrate two possible divisions of the plane into $2k+2$ regions that satisfy condition \eqref{cond}.

There are three types of deformations to the regions for both divisions. We consider the change in $S_{\text{topo}}$ under such deformations.

First, consider deforming the boundary between two regions in figures~\ref{fig:pie} and \ref{fig:piehole}. This can either be the boundary between two slices of the pie, say $A_1$ and $A_2$, or the boundary between a slice of the pie and the complement of the pie, say $A_1$ and $A_0$. In the former case, region $A_0$ is not involved in the deformation, so case~\ref{case1} of our general analysis for the topological invariance of $S_{\text{topo}}$ implies that $\Delta S_{\text{topo}}=0$. In the latter case, there is one more region ($A_1$) besides $A_0$ that are involved in the deformation, so case~\ref{case2(a)} of our general analysis implies $\Delta S_{\text{topo}}=0$.

Next, consider deforming a triple point where three regions meet. Without loss of generality, consider the three regions $A_1$, $A_2$ and $A_0$. There are two more regions ($A_1$ and $A_2$) besides $A_0$ that are involved in the deformation, so case~\ref{case2(b)} of our general analysis implies that $\Delta S_{\text{topo}}=0$.

Note that for the division in figure~\ref{fig:pie}, one could also deform the center of the pie chart, which is seemingly different from other points in the plane. All $2k+1$ regions but $A_0$ are involved in the deformation. However, since $A_0$ is not involved, case~\ref{case1} of our general analysis still applies in this case, and $\Delta S_{\text{topo}} = 0$.

To compute the topological entropy $S_{\text{topo}}$ for these two divisions, we apply the general formula \eqref{eqn:generalreg}. For pie-chart divisions in figures~\ref{fig:pie} and \ref{fig:piehole},
\begin{equation*}
b_0[\partial (A_i\dots A_{i+k})] = b_0[\partial (A_{i+1}\dots A_{i+k})] = 1,
\end{equation*}
therefore the summation in \eqref{eqn:generalreg} gives zero, and one simply counts $b_0$ for the boundary of the union of all $2k+1$ regions, which yields $-\gamma$ and $-2\gamma$ respectively.

\begin{figure}[h]
    \centering
    \includegraphics[width=0.5\textwidth]{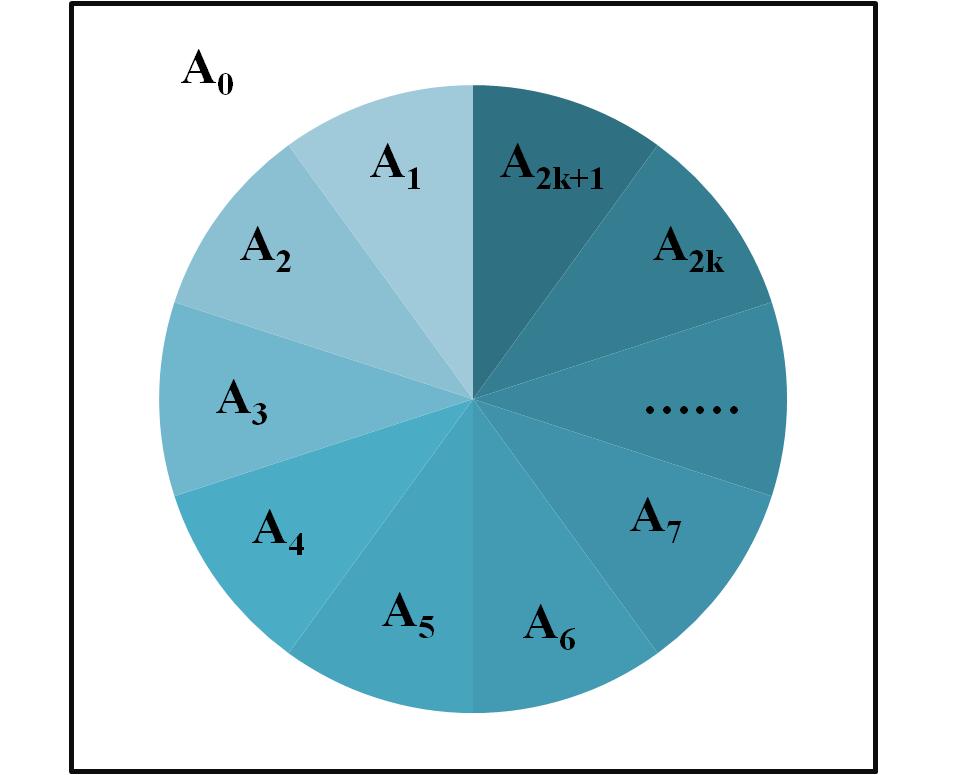}
    \caption{A pie-chart division of the plane into $2k+2$ regions, labeled by $A_0, A_1, \dots, A_{2k+1}$. }
    \label{fig:pie}
\end{figure}

\begin{figure}[ht]
    \centering
    \includegraphics[width=0.5\textwidth]{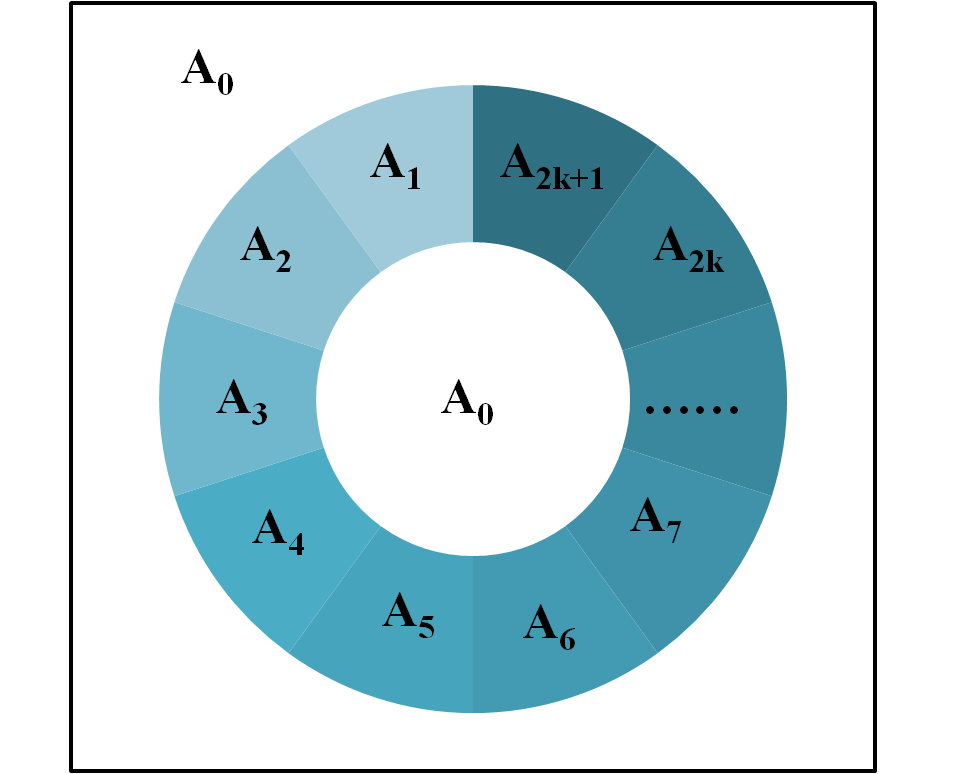}
    \caption{A pie-chart division of the plane into $2k+2$ regions, with a hole in the middle, labeled by $A_0, A_1, \dots, A_{2k+1}$. }
    \label{fig:piehole}
\end{figure}

\begin{figure}[ht]
    \centering
    \includegraphics[width=0.5\textwidth]{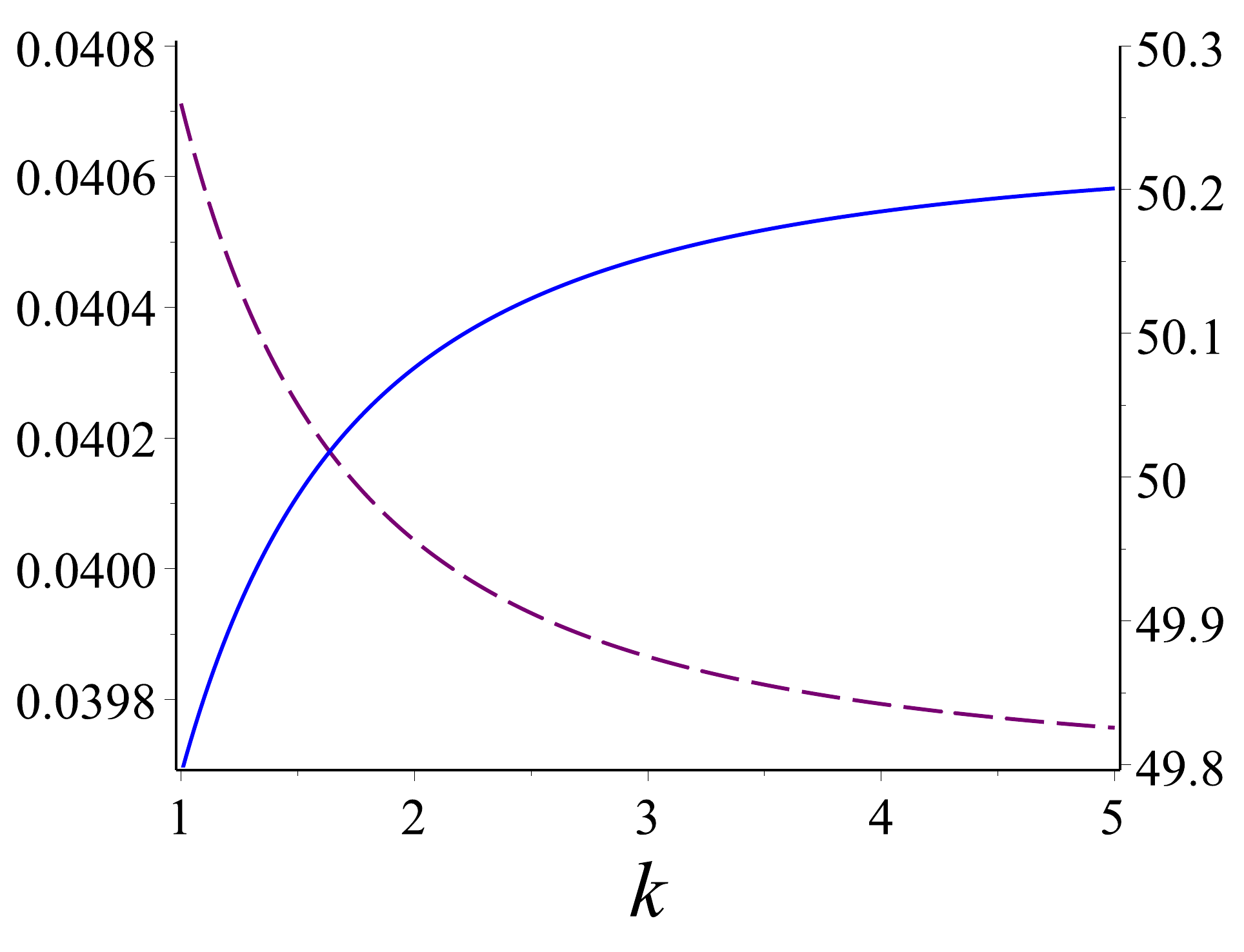}
    \caption{For systems that deviate from exact area law, the general behaviors of corrections  to $S_{topo}$  in the forms of  $1/\ell$ and $\ell\log\ell$ are sketched in purple and blue respectively (color online) for $R=10$. The numerical values are up to some unknown constant of order $\epsilon$ or $\beta$. }
    \label{fig:devplot}
\end{figure}

\subsection{Beyond area-law scaling}
In the above sections, we have considered systems where the entanglement entropy is in the form of \eqref{enscaling}. In general, \eqref{enscaling} will be supplemented with various corrections. We here consider a few examples and examine the behavior of the topological entropy \eqref{tee} as a function of $k$ in systems that deviates from exact area-law scaling.

Since the correction leads to imperfect cancellation of the local contributions to entanglement entropy, \eqref{tee} thus yields the topological entropy up to some local correction factors. For the following examples, we restrict ourselves to the pie-chart division in $2+1$ dimensions in figure~\ref{fig:pie} with radius $R$ in units of the lattice size $a$. For the sake of simplicity, the $2k+1$ slices are divided evenly.

For a generic $2+1$ dimensional gapped system with non-trivial topological order, the entanglement entropy of some region A with perimeter $L$ large compared to the correlation length is given by
\begin{equation}
S_A = \alpha L - b_0 \gamma +\frac{\beta_1 }{L}+\frac{\beta_3 }{L^3} + \cdots
\end{equation}
where the correction from \eqref{enscaling} assumes the form of $\beta_p/L^{p}$, for all odd integers $p$.

The $\mathcal{O}(L^{-p})$ correction to the topological entropy \eqref{tee} is given by
\begin{equation}
\Delta S_{\text{topo}}^p =\frac{\beta_p}{(2\pi R)^p} \Big\{ (2k+1)^{p+1}\Big(\frac{1}{(x+1)^p} -\frac{1}{x^p}\Big)-1\Big\},
\end{equation}

where $x=(1+2/\pi)k+1/\pi$.
For $k\geq1$, $(\partial |\Delta S^p_{\text{topo}}|/\partial k)<0$ for each $p$. The higher $k$ expressions are therefore slightly less sensitive to deviations of the local piece from perfect area-law scaling.

More generally, we may consider entropy scaling $S_A=\alpha L-\gamma+\epsilon f(L/a)$ where we recast other small deviations in the local piece of entanglement entropy into the form of $f(L/a)$. Let $\ell=L/a$ for a lattice with spacing $a$, we here briefly sketch the behaviors for corrections $f(\ell)=\log(\ell)$  and $f(\ell)=\ell \log(\ell)$ \cite{AreaLawEntRev,FreeFermion}. Note that near criticality\cite{FreeFermion}, $\ell\log(\ell)$ scaling becomes dominant in the local piece of entanglement entropy. For systems sufficiently far from a phase transition, we can treat them as an order $\epsilon$ correction in the area-law systems.

For $f(\ell)=\log(\ell)$,
\begin{equation}
\Delta S_{\text{topo}}=\epsilon\log \Big\{\frac{1}{2\pi R}\Big(1+\frac{1}{x} \Big)^{2k+1} \Big\},
\end{equation}
where $x=(1+2/\pi)k+1/\pi$.

And for $f(\ell)=\ell\log(\ell)$,
\begin{equation}
\Delta S_{\text{topo}}=\epsilon\log \Big \{ \frac{1}{(2\pi R)^{2\pi R}} \Big( \frac{(2R + \frac{2\pi}{2k+1}(k+1)R)^{2R + \frac{2\pi}{2k+1}(k+1)R}}{(2R + \frac{2\pi}{2k+1}kR)^{2R + \frac{2\pi}{2k+1}kR}} \Big)^{2k+1} \Big\}.
\end{equation}
In such cases, $(\partial |\Delta S_{\text{topo}}|/\partial k)>0$ for regions with boundary much larger than the correlation length. which renders higher $k$ definitions of the topological entropy $S_{\text{topo}}$ more sensitive to local corrections to entanglement entropy.
In principle, for some realistic condensed matter system with non-trivial topological order, the generalized definition \eqref{tee} offers a wider range of selection where one can choose the optimal $k$ for purposes of studying both the topological entanglement entropy and local deviations from area-law scaling.

\section{All entropy inequalities for systems with an exact area law}
\label{sec:entropy cone}
We will now derive the full set of constraints satisfied by the entanglement entropy in systems with an exact area law, $S(A) \sim \partial A$.
We begin with a useful construction from \cite{BNOSSW} that allows us to reduce from continuous geometries to a combinatorial problem.
For simplicity, we shall assume that the system lives on a manifold without boundary.

Let $A_1, \dots, A_n$ be disjoint (apart from their boundaries) regions in a system with an exact area law.
We introduce the purifying region $A_{n+1}$ as the closure of $(A_1 \cup \dots \cup A_n)^c$.
The entropy of an arbitrary composite region $A_I = \bigcup_{i \in I} A_i$ for $I \subseteq [n+1] := \{1,\dots,n+1\}$ can then be evaluated in the following way,
\[ S(A_I) \sim \partial A_I = \sum_{i \in I, j \in I^c} \partial A_i \cap \partial A_j, \]
where $I^c$ denotes the complement of $I$ in $[n+1]$.

We now consider the undirected complete graph on $n+1$ vertices, equipped with the edge weights $w(i,j) = \partial A_i \cap \partial A_j$.
Let $w(I,J) = \sum_{i \in I, j \in J} w(i,j)$ denote the total weight of all edges between two disjoint subsets $I$ and $J$, and $\delta(I) = w(I,I^c)$ the \emph{cut function}.
Then it follows from the above that
\[ S(A_I) \sim \partial A_I = \delta(I). \]
Conversely, for any given undirected graph with non-negative edge weights we can always construct a geometry and associated regions $A_1, \dots, A_{n+1}$ such that $\partial A_I = \delta(I)$ (cf.\ \cite{BNOSSW}).
Therefore, proving entropy inequalities for systems with an exact area law is completely equivalent to proving linear inequalities for the cut function.

To determine if a linear inequality
\begin{equation}
\label{eq:cut inequality}
  \sum_I c_I \, \delta(I) \geq 0
\end{equation}
holds for the cut function in an arbitrary undirected weighted graph, we expand:
\begin{align*}
  &\sum_I c_I\, \delta(I)
= \sum_I c_I\, w(I,I^c)
= \sum_I c_I \, \sum_{\mathclap{i \in I, j \in I^c}} w(i,j)
= \sum_{i \neq j} w(i,j) \, \sum_{\mathclap{I : i \in I, j \not\in I}} c_I
= \sum_{\{i,j\}} w(i,j) \, \sum_{\mathclap{I : i \in I \text{ xor } j \in I}} c_I
\end{align*}
In the last step, the outer sum is over edges of the undirected graph.
Since the edge weights $w(i,j)$ are arbitrary non-negative numbers, this immediately implies that the inequality \eqref{eq:cut inequality} is valid if and only if
\begin{equation}
\label{eq:single edge}
\end{equation}
\[ \sum_{I : i \in I \text{ xor } j \in I} c_I \geq 0 \]
for any edge $\{i,j\}$.
Note that \eqref{eq:single edge} asserts simply that the inequality \eqref{eq:cut inequality} holds for the graph with a single edge $\{i,j\}$ of edge weight $1$, since the cut function in this case is given by
\begin{equation}
\label{eq:bell entropies}
(\beta_{ij})_I := \delta(I) = \begin{cases} 1 & i \in I \text{ xor } j \in I \\ 0 & \text{otherwise} \end{cases}.
\end{equation}
But note that $\beta_{ij}$ are precisely the entropies of a Bell pair shared between subsystems $i$ and $j$ in an $(n+1)$-partite pure state.
In view of our reduction from systems with an exact area law to graphs, we thus obtain the following result:

\begin{lem}
\label{lem:ieq criterion}
  An entropy inequality $\sum_{I \subseteq [n]} c_I \, S(A_I) \geq 0$ is valid for all systems with an exact area law if and only if it is valid for the entropies of Bell pairs shared between any two subsystems $A_i$ and $A_j$ of the purified $(n+1)$-partite system.
\end{lem}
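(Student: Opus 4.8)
The plan is to chain together the reduction steps already assembled in the paragraphs preceding the lemma, so that essentially no new work is required. First I would invoke the correspondence between systems with an exact area law and undirected weighted graphs on $n+1$ vertices: on one side, given regions $A_1,\dots,A_{n+1}$ with $A_{n+1}$ the purifier, the choice $w(i,j)=\partial A_i\cap\partial A_j$ yields $S(A_I)\sim\delta(I)$ for every $I\subseteq[n+1]$; on the other side, by the construction of \cite{BNOSSW}, every graph with non-negative edge weights is realized by some area-law geometry in this way. Hence $\sum_{I\subseteq[n]} c_I\,S(A_I)\geq 0$ holds for all area-law systems if and only if $\sum_I c_I\,\delta(I)\geq 0$ holds for the cut function of every undirected graph with non-negative edge weights. (Here I would also recall that purity gives $S(A_I)=S(A_{I^c})$ in the $(n+1)$-partite state, so the functional on $I\subseteq[n]$ extends unambiguously to one symmetric under $I\leftrightarrow I^c$, matching the graph picture.)

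Next I would use the expansion displayed just before the lemma,
\[ \sum_I c_I\,\delta(I) = \sum_{\{i,j\}} w(i,j)\left(\sum_{I\,:\, i\in I \text{ xor } j\in I} c_I\right), \]
where the outer sum runs over edges $\{i,j\}$. Since the weights $w(i,j)$ range over all non-negative reals independently, this linear form is non-negative for every graph if and only if each bracketed coefficient $\sum_{I:\, i\in I \text{ xor } j\in I} c_I$ is non-negative. By \eqref{eq:bell entropies} this is exactly the assertion that the inequality holds on the single-edge cut function $\beta_{ij}$, which is the entropy vector of one Bell pair shared between $A_i$ and $A_j$ in the $(n+1)$-partite pure state. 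Combining with the previous paragraph, I would conclude: valid for all area-law systems $\iff$ valid for each single Bell pair $\beta_{ij}$.

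To finish, I would upgrade ``single Bell pair'' to ``arbitrary configuration of Bell pairs.'' One direction is immediate, since a single Bell pair is such a configuration. For the converse, additivity of the von Neumann entropy under tensor products shows that for a state consisting of Bell pairs, with the $\alpha$-th pair shared between $A_{i_\alpha}$ and $A_{j_\alpha}$, the entropy vector equals $\sum_\alpha \beta_{i_\alpha j_\alpha}$, a non-negative integer combination of the single-edge vectors; linearity of the functional $\sum_I c_I\,x_I$ then forces non-negativity on every such combination once it holds on each $\beta_{ij}$. Chaining these equivalences closes the ``if and only if.''

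I do not anticipate a genuine obstacle, since the heavy lifting—the geometry-to-graph reduction and the single-edge expansion—is done in the text preceding the lemma. The only points that merit a careful sentence are (i) citing the converse half of the \cite{BNOSSW} construction, so that graphs with arbitrary non-negative weights genuinely arise from area-law geometries, and (ii) the observation that passing from single Bell pairs to arbitrary Bell-pair configurations contributes nothing new, because entropy is additive over tensor factors while the inequality is linear; of these, (ii) is the one I would state most explicitly.
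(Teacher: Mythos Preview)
Your proposal is correct and follows exactly the route the paper itself takes: the geometry-to-graph reduction, the edge expansion of $\sum_I c_I\,\delta(I)$, and the identification of each edge coefficient with the evaluation on $\beta_{ij}$ are precisely the ingredients assembled in the text immediately preceding the lemma, and the lemma is stated as a summary of that discussion rather than given a separate proof. Your additional step (ii), upgrading from single Bell pairs to arbitrary tensor products of Bell pairs, is fine but unnecessary for the lemma as phrased, since ``Bell pairs shared between any two subsystems $A_i$ and $A_j$'' refers to the individual vectors $\beta_{ij}$, one for each pair $\{i,j\}$.
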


In section~\ref{sec:trivialtopo}, we had proved that the cyclic inequalities \eqref{eqn:cycineq} hold with \emph{equality} for system that satisfy an exact area law.
It follows immediately from lemma~\ref{lem:ieq criterion} that we can test the validity of an arbitrary entropy equality by verifying that they hold with equality when evaluated for Bell pairs.
In \cite{BNOSSW}, it was observed that this is the case for the cyclic inequalities \eqref{eqn:cycineq} as well as four other holographic entropy inequalities established therein.
It follows that all these inequalities hold with equality for systems satisfying an exact area law.
In particular, this confirms our explicit derivation for the cyclic inequalities in section~\ref{sec:trivialtopo}.

\bigskip

In principle, lemma~\ref{lem:ieq criterion} solves completely the problem of characterizing the entanglement entropy in systems with an exact area law.
We will now describe the set of all possible entanglement entropies more concretely.
For this, it is useful to observe that, for any fixed number of regions $n$, the collection of valid entropy inequalities $\sum_I c_I S(A_I) \geq 0$ cuts out a convex cone.
This cone consists of all vectors $s = (S(A_I))_{\emptyset \neq I \subseteq [n]} \in \RR^{2^n-1}$ formed from the entanglement entropies obtained by varying over arbitrary regions $A_1, \dots, A_n$ and all systems satisfying an exact area law.
Following \cite{EntCone1,EntCone2,gross2013stabilizer,BNOSSW}, we shall call it the \emph{area-law entropy cone}.
Like any convex cone, it can be dually described in terms of its extreme rays, which we obtain immediately from lemma~\ref{lem:ieq criterion}:

\begin{lem}
  The extreme rays of the area-law entropy cone for $n$ regions are given by the entropy vectors $\beta_{ij}$ of Bell pairs shared between any two subsystems in the purified $(n+1)$-partite system.
\end{lem}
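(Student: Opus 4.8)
The plan is to build on the identification, already made in the discussion preceding Lemma~\ref{lem:ieq criterion}, of the area-law entropy cone $\mathcal{C} \subset \RR^{2^n-1}$ with the conic hull of the Bell-pair vectors $\beta_{ij}$, $1 \le i < j \le n+1$: every area-law entropy vector is $\sum_{\{i,j\}} w(i,j)\,\beta_{ij}$ for some non-negatively weighted graph on $n+1$ vertices, and conversely each such conic combination is realized by an honest area-law geometry. (If one prefers to deduce this from Lemma~\ref{lem:ieq criterion} rather than from the explicit construction, it also follows by duality: that lemma says precisely that $\mathcal{C}$ and $\mathrm{cone}(\{\beta_{ij}\})$ have the same dual cone, and since $\mathrm{cone}(\{\beta_{ij}\})$ is closed, the bipolar theorem forces $\mathcal{C} = \mathrm{cone}(\{\beta_{ij}\})$.) Since $\mathcal{C}$ is thus finitely generated, and since it lies in the non-negative orthant and is therefore pointed, its extreme rays are well defined and every one of them is spanned by one of the generators $\beta_{ij}$. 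So the statement reduces to showing that \emph{every} $\beta_{ij}$ does span an extreme ray, and that distinct pairs $\{i,j\}$ give distinct rays.

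The key step is to produce, for each pair $\{i,j\} \subseteq [n+1]$, a valid entropy inequality that is saturated by $\beta_{ij}$ but strictly positive on every other generator $\beta_{kl}$. I claim that
\[ \sum_{m \,\in\, [n+1]\setminus\{i,j\}} S(A_m) \;\ge\; 0 \]
does the job, where $S(A_{n+1})$ is read as $S(A_{[n]})$ via purity, and the term is simply omitted when $n+1 \in \{i,j\}$. This inequality is trivially valid, being a sum of the non-negativity statements $S(A_m) \ge 0$. Evaluating it on $\beta_{kl}$ using the combinatorial formula \eqref{eq:bell entropies} — $(\beta_{kl})_I = 1$ exactly when one of $k,l$ lies in $I$ — one finds that its value equals $\lvert \{k,l\}\setminus\{i,j\}\rvert$, which is $0$ when $\{k,l\} = \{i,j\}$ and at least $1$ otherwise.

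From here the conclusion is routine. For an arbitrary $s = \sum_{\{k,l\}} \lambda_{kl}\,\beta_{kl} \in \mathcal{C}$ with all $\lambda_{kl} \ge 0$, the left-hand side above evaluates to $\sum_{\{k,l\}} \lambda_{kl}\,\lvert\{k,l\}\setminus\{i,j\}\rvert$, a sum of non-negative terms that vanishes iff $\lambda_{kl} = 0$ for every $\{k,l\} \neq \{i,j\}$, i.e.\ iff $s \in \RR_{\ge 0}\,\beta_{ij}$. Hence the face of $\mathcal{C}$ exposed by this inequality is precisely the one-dimensional ray $\RR_{\ge 0}\,\beta_{ij}$, so $\beta_{ij}$ is an extreme ray; the same computation also shows $\beta_{ij}$ is not a positive multiple of any other $\beta_{kl}$, so the $\binom{n+1}{2}$ rays are pairwise distinct. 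Combined with the first paragraph, this shows the extreme rays of $\mathcal{C}$ are exactly those spanned by the Bell-pair vectors $\beta_{ij}$.

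I expect the only step requiring any insight to be guessing the exposing functional $\sum_{m \notin \{i,j\}} S(A_m)$ in the second paragraph; after that everything is bookkeeping with \eqref{eq:bell entropies}. The one place to be careful is the interplay between the index set $[n+1]$ of the purified system, on which the $\beta_{ij}$ naturally live, and the coordinates $I \subseteq [n]$ of the entropy cone — concretely, the substitution $S(A_{n+1}) = S(A_{[n]})$ — which is exactly what makes the evaluation $\langle c^{(ij)}, \beta_{kl}\rangle = \lvert\{k,l\}\setminus\{i,j\}\rvert$ hold uniformly for all pairs in $[n+1]$, including those involving the purifier.
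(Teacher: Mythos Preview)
Your proof is correct. The identification $\mathcal{C}=\mathrm{cone}(\{\beta_{ij}\})$ via duality is exactly what the paper intends when it says the lemma follows ``immediately'' from Lemma~\ref{lem:ieq criterion}, and your remark that the cone is pointed (so extreme rays exist and are among the generators) makes explicit a step the paper leaves tacit.

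Where you diverge is the choice of exposing functional. You isolate $\beta_{ij}$ with $\sum_{m\notin\{i,j\}} S(A_m)$, which evaluates to $\lvert\{k,l\}\setminus\{i,j\}\rvert$ on $\beta_{kl}$. The paper instead --- in the proof of Theorem~\ref{thm:complete}, where this argument is actually carried out --- uses the mutual information $S(A_k)+S(A_l)-S(A_{kl})$, which evaluates to $2$ on $\beta_{kl}$ and $0$ on every other $\beta_{ij}$. Both functionals separate the generators, so both proofs go through. The paper's choice has the bonus that it is simultaneously a facet (subadditivity for $l\le n$, Araki--Lieb for $l=n{+}1$), so the same computation proves linear independence of the $\beta_{ij}$ and identifies the defining inequalities, feeding directly into the dimension count of Theorem~\ref{thm:complete}. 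Your functional is not a facet (single-entropy positivity is implied by subadditivity plus Araki--Lieb), but it is arguably the more transparent way to see extremality of each $\beta_{ij}$ on its own, without detouring through linear independence.
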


Since the entropies of Bell pairs can be realized holographically, we may think of the area-law entropy cone as a degeneration of the holographic entropy cone defined in \cite{BNOSSW}.
It is arguably the smallest entropy cone that can capture bipartite entanglement.

The following theorem then gives a complete characterization of the entanglement entropy in systems with an exact area law:

\begin{thm}
\label{thm:complete}
  A minimal and complete set of entropy (in)equalities for systems with an exact area law is given by
  (1) the subadditivity inequality $S(A_1) + S(A_2) \geq S(A_1 A_2)$ and its permutations,
  (2) the Araki-Lieb inequality $S(A_1) + S(A_1 \dots A_n) \geq S(A_2 \dots A_n)$ and its permutations, and
  (3) the multivariate information \emph{equalities} $\sum_{I \subseteq V} (-1)^{\lvert I \rvert} S(A_I) = 0$ induced by any subset $V \subseteq [n]$ of cardinality at least three.
\end{thm}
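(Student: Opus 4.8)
The plan is to use Lemma~\ref{lem:ieq criterion}, which reduces everything to the finite collection of Bell-pair entropy vectors $\beta_{ij}$ for $1 \le i < j \le n+1$ in the purified $(n+1)$-partite system. Concretely, I would: (i) verify that the listed (in)equalities are \emph{valid} for all $\beta_{ij}$; (ii) show that every valid inequality is a consequence of them; and (iii) argue minimality. By Lemma~\ref{lem:ieq criterion}, the area-law entropy cone in $\RR^{2^n-1}$ is exactly the conical hull of $\{\beta_{ij}\}$, so step (ii) amounts to computing the facets and affine hull of this polyhedral cone.

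For step (i): evaluating $\beta_{ij}$, one has $(\beta_{ij})_I = 1$ iff $I$ separates $i$ from $j$. Subadditivity $S(A_1)+S(A_2)-S(A_1A_2)$ evaluated on $\beta_{ij}$ equals $[\text{sep } i,j \text{ by }\{1\}] + [\text{sep by }\{2\}] - [\text{sep by }\{1,2\}]$; checking the cases $\{i,j\}\cap\{1,2\}$ of size $0,1,2$ shows this is always $\ge 0$ (it is $0,1,0$ respectively, using that a Bell pair with both endpoints in $A_1A_2$ or both outside contributes nothing). Similarly, Araki-Lieb $S(A_1)+S(A_1\dots A_n)-S(A_2\dots A_n)$ on $\beta_{ij}$: the region $A_1\dots A_n$ has complement $\{n+1\}$, so the only nontrivial case is when one of $i,j$ equals $n+1$, and one checks the value is $\ge 0$ in all cases. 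For the multivariate information equalities, since $S = \alpha L$ is additive over connected boundary components and $\beta_{ij}$ is itself a cut function $\delta(I)$, one can invoke the computation already sketched for the cyclic inequalities in Section~\ref{sec:trivialtopo} (or observe directly that $\sum_{I\subseteq V}(-1)^{|I|}\delta(I) = 0$ for any cut function $\delta$ whenever $|V|\ge 3$, because $\sum_{I\subseteq V}(-1)^{|I|}[\text{sep }i,j\text{ by }I]$ telescopes to $0$ for $|V|\ge 3$ — split on whether $i,j \in V$).

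For step (ii), the main work: I would show the cone $C = \operatorname{cone}\{\beta_{ij}\}$ has affine hull precisely $\{s : \sum_{I\subseteq V}(-1)^{|I|}s_I = 0 \text{ for all } |V|\ge 3\}$, and that within this affine subspace the facets of $C$ are exactly the subadditivity and Araki-Lieb inequalities. The equalities (3) span the orthogonal complement of $\operatorname{span}\{\beta_{ij}\}$: since the $\beta_{ij}$ are the entropy vectors of the $\binom{n+1}{2}$ Bell pairs, $\operatorname{span}\{\beta_{ij}\}$ has dimension $\binom{n+1}{2}$ (one can check linear independence, e.g.\ via the pairwise mutual informations $I(A_i\!:\!A_j)$), matching the codimension of the span of the equalities (3), which is $2^n - 1 - \binom{n+1}{2}$ — so I would verify this dimension count. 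For the facet description, standard polyhedral duality says I must show: any inequality $\sum_I c_I s_I \ge 0$ valid on all $\beta_{ij}$ and not implied by the equalities is a nonnegative combination of subadditivities, Araki-Liebs, and equalities. Rather than a brute-force Fourier–Motzkin, I would exploit the combinatorial structure: the instances of subadditivity/Araki-Lieb correspond to the $\binom{n+1}{2}$ distinct 2-element partitions $\{i\}\,|\,[n+1]\setminus\{i\}$ refined into pairs; the key point is that each $\beta_{ij}$ lies on all but a controlled set of these facets, so the facets "separate" the extreme rays in the expected way. This is the step I expect to be the main obstacle — turning "valid on all Bell pairs" into an explicit certificate — and I would attack it by reducing to the known holographic/subadditivity cone results of \cite{BNOSSW,EntCone1,EntCone2} or by a direct argument that any face of $C$ is spanned by $\beta_{ij}$'s forming a "laminar-like" sub-configuration cut out by one of the listed inequalities.

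For step (iii), minimality: I would show no listed (in)equality is redundant. For each equality indexed by $V$ with $|V|\ge 3$, exhibit a linear functional vanishing on all $\beta_{ij}$ but realizing that equality's coefficient pattern independently — i.e.\ the equalities (3) are linearly independent (a triangular/inclusion argument on $|V|$ works, since the $V$-equality is the unique one involving $s_{[n]\cap V}$ with that sign pattern among subsets of $V$). For each subadditivity and Araki-Lieb facet, produce a Bell-pair vector (or nonnegative combination) saturating only that inequality among the listed ones — e.g.\ $\beta_{12}$ saturates $S(A_1)+S(A_2)\ge S(A_1A_2)$? no: recompute — one wants a point of $C$ on exactly one facet hyperplane, which follows once the facet structure in (ii) is established. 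I would close by remarking that completeness plus the facet/affine-hull identification is exactly the statement that these (in)equalities define $C$, hence characterize the area-law entanglement entropy.
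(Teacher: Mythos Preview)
Your overall strategy---validity on the $\beta_{ij}$, affine hull via the equalities (3), facets via subadditivity/Araki-Lieb, then minimality---is exactly the paper's approach, and most of your sketches are fine. But you have misidentified where the difficulty lies. You flag the facet computation in step~(ii) as ``the main obstacle'' and propose attacking it by Fourier--Motzkin, reduction to \cite{BNOSSW,EntCone1,EntCone2}, or a laminar-face argument. In fact the observation you already invoke for linear independence---the pairwise mutual informations---resolves this step in one line.

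Concretely: for $k,l \in [n+1]$ one computes
\[
  (\beta_{ij})_{\{k\}} + (\beta_{ij})_{\{l\}} - (\beta_{ij})_{\{k,l\}} \;=\;
  \begin{cases} 2 & \{i,j\}=\{k,l\},\\ 0 & \text{otherwise.}\end{cases}
\]
This is a Kronecker-delta pairing between the $\binom{n+1}{2}$ extreme rays $\beta_{ij}$ and the $\binom{n+1}{2}$ linear functionals $s \mapsto s_{\{k\}}+s_{\{l\}}-s_{\{k,l\}}$. It simultaneously gives: (a) the $\beta_{ij}$ are linearly independent, so the cone is simplicial of dimension $\binom{n+1}{2}$; and (b) each functional is nonnegative on every generator and vanishes on all but one, which is precisely the criterion for being a facet of a simplicial cone. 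No further combinatorics is needed. For $k,l\le n$ these functionals are subadditivity; for $l=n+1$ one uses $S(A_I)=S(A_{I^c})$ to rewrite them as Araki--Lieb. Minimality is then automatic: in a simplicial cone every facet is irredundant, and the linear independence of the equalities (3) (your triangular argument on $|V|$ is correct) handles the affine-hull part.

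So your proposal would work, but the heavy machinery you anticipate is unnecessary once you notice that the mutual-information functionals form the dual basis to the Bell-pair rays.
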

\begin{proof}
  We first argue that the entropy equalities in (3) are correct and linearly independent.
  Their correctness can be verified by evaluating them on the rays $\beta_{ij}$ for $i < j \in [n+1]$:
  \[
    \sum_{I \subseteq V} (-1)^{\lvert I \rvert} (\beta_{ij})_I
    = \sum_{\mathclap{I : i \in I \subseteq V \setminus \{j\}}} (-1)^{\lvert I \rvert}
    + \sum_{\mathclap{I : j \in I \subseteq V \setminus \{i\}}} (-1)^{\lvert I \rvert}
  \]
  By symmetry, it suffices to consider the first sum. If $i \not\in V$ then it is zero. Otherwise,
  \[
    \sum_{\mathclap{I : i \in I \subseteq V \setminus \{j\}}} (-1)^{\lvert I \rvert}
    = - \sum_{\mathclap{J : J \subseteq V \setminus \{i,j\}}} (-1)^{\lvert J \rvert}
    = - \sum_{k=0}^{\lvert V \setminus \{i,j\} \rvert} (-1)^k \binom {\lvert V \setminus \{i,j\} \rvert} k
    = 0
  \]
  by the standard identity for an alternating sum of binomial coefficients, which is applicable since $\lvert V \setminus \{i,j\}\rvert \geq 3-2=1$.
  The fact that the equalities in (3) are all linearly independent can easily be seen by induction on $\lvert V \rvert$.

  It follows from the above that the area-law cone is contained in a linear subspace of dimension
  $2^n - 1 - \sum_{k=3}^n \binom n k = n + \binom n 2 = \binom {n+1} 2.$
  We will now show that this is indeed the dimension of the area-law cone.
  For this, it suffices to observe that $(\beta_{ij})_k + (\beta_{ij})_l - (\beta_{ij})_{\{k,l\}} = 2$ if $\{i,j\} = \{k,l\}$, and otherwise zero.
  This not only implies that the $\binom {n+1} 2$ many extreme rays $\beta_{ij}$ are all linearly independent, but also that the area-law entropy cone is cut out by the inequalities
  \[ S(A_k) + S(A_l) \geq S(A_{kl}) \]
  on the subspace defined by the multivariate information equalities (3).
  For $l \leq n$, these are just the inequalities in (1), while for $l = n+1$ we obtain the inequalities in (2) by using the relation $S(A_I) = S(A_{I^c})$.
  It is clear from the above that the entropy (in)equalities (1)--(3) form a minimal set.
\end{proof}

Geometrically, the area-law entropy cone is an ``orthant'' of dimension $\binom {n+1} 2$, as follows from the proof of the theorem, where we have shown that the extreme rays are linearly independent. We note that the set of defining (in)equalities is in general not unique as the entropy cone has positive codimension for $n > 2$.

\subsection{Generating entropy equalities from graphs}
\label{sec:graph}

Our method can be easily adapted to include information about the spatial connectivity of the regions $A_1, \dots, A_n$ that enter an (in)equality: If we can guarantee that $A_i \cap A_j = \emptyset$ then we do not need to consider the corresponding Bell pair $\beta_{ij}$ when verifying an entropy (in)equality by using lemma~\ref{lem:ieq criterion}.
For a concrete example, consider the conditional mutual information $I(A:C|B) = S(AB) + S(BC) - S(ABC) - S(B)$, which is equal to zero for all Bell pairs except for $\beta_{AC}$.
If we choose $A$, $B$, and $C$ as in the figure~\ref{fig:cmi regions} below then $I(A:C|B) = 0$ for systems with an exact area law, since $A \cap C = \emptyset$. This cancellation has been used in \cite{TEEWenLevin} to extract the topological entanglement entropy.

\begin{figure}[h]
\centering
\includegraphics[width=0.3\textwidth]{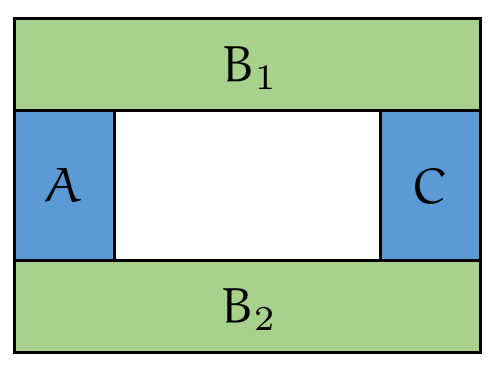}
\caption{In systems with an exact area law, the conditional information $I(A:C|B)$ vanishes for this configuration of regions \cite{TEEWenLevin}.}
\label{fig:cmi regions}
\end{figure}

This example has the following pleasant generalization:

\begin{lem}
\label{lem:graph eqn}
Let $(V,E)$ be an undirected graph on the vertex set $V = [n]$, $n \geq 3$, and let $A_1, \dots, A_n$ be regions such that $\partial A_i \cap \partial A_j \neq \emptyset$ only if $\{i,j\} \in E$.
Then we have the following entropy equality,
\begin{equation}
\label{eq:graph eqn}
  \sum_{I \subseteq V} (-1)^{\lvert I \rvert} \sum_{J \in \pi_0(I)} S(J) = 0,
\end{equation}
where $\pi_0(I)$ denotes the connected components of the induced subgraph with vertex set $I$.

For a complete graph $(V,E)$, \eqref{eq:graph eqn} is precisely one of the multivariate information equalities proved in theorem~\ref{thm:complete} for arbitrary regions.
\end{lem}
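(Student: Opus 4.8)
The plan is to reduce \eqref{eq:graph eqn} to a short check on Bell-pair entropy vectors and then to observe that the edge hypothesis makes that check essentially trivial. By Lemma~\ref{lem:ieq criterion}, together with the refinement noted above whereby the Bell pair $\beta_{ij}$ may be dropped as soon as the boundary overlap $\partial A_i \cap \partial A_j$ (equivalently, the graph weight $w(i,j)$) is empty, an entropy equality holds for every area-law system with the prescribed connectivity if and only if it evaluates to $0$ on $\beta_{ij}$ for each $\{i,j\} \in E$ and on $\beta_{i,n+1}$ for each $i \in [n]$, where $A_{n+1}$ is the purifying region of Section~\ref{sec:entropy cone}; the pairs $\{i,j\}$ with $i,j \in [n]$ and $\{i,j\}\notin E$ carry zero weight by hypothesis and need not be checked, while the pairs involving $A_{n+1}$ are unconstrained and must be. The left-hand side of \eqref{eq:graph eqn} is a linear combination $\sum_K c_K\, S(A_K)$ over nonempty $K \subseteq [n]$ (the $K = \emptyset$ term being an empty sum), so it suffices to evaluate this combination on each such Bell-pair vector.

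I would then carry out the two evaluations. For $\beta_{i,n+1}$ one has $(\beta_{i,n+1})_J = 1$ iff $i \in J$, for every $J \subseteq [n]$; since $i$, when present in $I$, lies in a unique connected component of the subgraph induced on $I$, the inner sum $\sum_{J \in \pi_0(I)} (\beta_{i,n+1})_J$ equals $1$ if $i \in I$ and $0$ otherwise. Substituting into \eqref{eq:graph eqn} gives $\sum_{I \ni i} (-1)^{\lvert I\rvert} = -\sum_{L \subseteq [n]\setminus\{i\}} (-1)^{\lvert L\rvert} = 0$, valid since $n \geq 2$. For $\beta_{ij}$ with $\{i,j\}\in E$, the inner sum $\sum_{J \in \pi_0(I)} (\beta_{ij})_J$ counts the connected components of the subgraph induced on $I$ that contain exactly one of $i$ and $j$. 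This is the one place the hypothesis is used: when both $i,j \in I$, the edge $\{i,j\}$ belongs to the induced subgraph, so $i$ and $j$ lie in the \emph{same} component, which then contains both and contributes $0$; when exactly one of them lies in $I$, its component contributes $1$; when neither does, the sum is $0$. Hence $\sum_{J \in \pi_0(I)} (\beta_{ij})_J = (\beta_{ij})_I$, and \eqref{eq:graph eqn} collapses to $\sum_{I :\, i \in I \text{ xor } j \in I} (-1)^{\lvert I\rvert} = 0$, an alternating sum of binomial coefficients over $[n]\setminus\{i,j\}$ that vanishes because $n \geq 3$. This establishes \eqref{eq:graph eqn}.

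For the final assertion, if $(V,E)$ is complete then every nonempty $I \subseteq V$ induces a connected subgraph, so $\pi_0(I) = \{I\}$, the inner sum reduces to $S(A_I)$, and \eqref{eq:graph eqn} becomes $\sum_{I \subseteq [n]} (-1)^{\lvert I\rvert} S(A_I) = 0$ --- exactly the multivariate information equality of Theorem~\ref{thm:complete}(3) for $V = [n]$.

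I expect the only genuine obstacle to be the combinatorial bookkeeping in the middle paragraph, and within it the single conceptual point: the assumption $\{i,j\}\in E$ is precisely what forbids $i$ and $j$ from ever lying in distinct connected components of an induced subgraph that contains both, so the potential ``$+2$'' contribution never occurs and $\sum_{J\in\pi_0(I)}(\beta_{ij})_J$ collapses to the Bell-pair value $(\beta_{ij})_I$. Everything else is the reduction of Lemma~\ref{lem:ieq criterion} together with two appeals to $(1-1)^m = 0$.
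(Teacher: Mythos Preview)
Your argument is correct but proceeds along a different route from the paper. The paper does not evaluate on Bell pairs at all; instead it observes that, for each fixed $I \subseteq V$, the connectivity hypothesis forces $\sum_{J \in \pi_0(I)} S(J) = S(I)$ outright, because the discrepancy $\sum_{J\in\pi_0(I)} S(J) - S(I)$ is a sum of boundary overlaps $\partial A_j \cap \partial A_k$ with $j,k$ lying in distinct components of the induced subgraph on $I$ --- hence $\{j,k\}\notin E$, and each such overlap vanishes by hypothesis. The graph expression \eqref{eq:graph eqn} then collapses, term by term in $I$, to the multivariate information $\sum_{I} (-1)^{\lvert I\rvert} S(A_I)$, which is already known to vanish by Theorem~\ref{thm:complete}.

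Your approach trades the appeal to Theorem~\ref{thm:complete} for a direct Bell-pair verification via the refinement of Lemma~\ref{lem:ieq criterion}, effectively redoing a small piece of that theorem (the $\beta_{i,n+1}$ check) along the way. The paper's argument is shorter and actually establishes the stronger per-$I$ identity $\sum_{J\in\pi_0(I)} S(J) = S(I)$ under the hypothesis; yours is more self-contained and makes explicit the one place the edge assumption is used --- preventing $i$ and $j$ from ever landing in different components when $\{i,j\}\in E$.
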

\begin{proof}

  It suffices to argue that the difference to the multivariate information vanishes given our assumption on the spatial connectivity of the regions $A_1, \dots, A_n$.
  For this, note that, for any $I \subseteq V$,
  \[ \left( \sum_{J \in \pi_0(I)} S(J) \right) - S(I) = \sum_{J \neq K \in \pi_0(I)} \sum_{j \in J, k \in K} \partial A_j \cap \partial A_k = 0, \]
  since $j$ and $k$ are in different connected components so that $(j,k) \not\in E$ and therefore $\partial A_j \cap \partial A_k = \emptyset$ by our assumption.
\end{proof}

For the graph displayed in figure~\ref{fig:strongsubadd} below we recover the statement derived above that $I(A:C|B) = 0$ for systems with $A \cap C = \emptyset$.

\begin{figure}[h]
\centering
\begin{tikzpicture}[%
  every node/.style={draw,fill=black,circle,minimum size=0.3cm},node distance=0.75cm]
  \node[circle, label=$A$] (one) at (0,0) {};
  \node[right=of one,label =$B$] (two) {};
  \node[right=of two, label=$C$] (three) {};

  \draw (one) -- (two) -- (three);
\end{tikzpicture}
\caption{Graph corresponding to $I(A:C|B) = 0$ for regions with $A \cap C = \emptyset$.}
\label{fig:strongsubadd}
\end{figure}
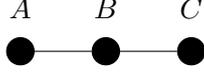

We note that the connectivity assumption in lemma~\ref{lem:graph eqn} is equivalent to requiring that $I(A_i : A_j) = 0$ for all $\{i,j\} \not\in E$.
We may therefore think of \ref{eq:graph eqn} as a \emph{constrained entropy equality} in the sense of \cite{LindenWinter}.
Below we list all non-trivial constrained entropy equalities obtained by lemma~\ref{lem:graph eqn} from graphs with four vertices:
\begin{align*}
 \text{(e.g., $K_{1,3}$):}\quad&S(A_1 A_2 A_3 A_4)-S(A_2 A_3 A_4)-S(A_1 A_2 A_4)-S(A_1 A_3 A_4)\\
 +&S(A_1 A_4)+S(A_2 A_4)+S(A_3 A_4)-S(A_4)= 0 \\
 \text{($C_4=K_{2,2}$):}\quad&S(A_1 A_2 A_3 A_4)-\sum_{i=1}^{4}S(A_i|A_{i+1}A_{i+2})=0  \\
 \text{(Diamond):}\quad&S(A_1 A_2 A_3 A_4)-S(A_1 A_2 A_3)-S(A_2 A_3 A_4)-S(A_3 A_4 A_1)\\
 -&S(A_1 A_2 A_4)+S(A_1 A_2)+S(A_1 A_4)+S(A_2 A_3)\\
 +&S(A_2 A_4)+S(A_3 A_4)-S(A_2)-S(A_4)=0  \\
 \text{($K_4=W_4$):}\quad&S(A_1 A_2 A_3 A_4) - S(A_1A_2A_3) - S(A_1A_2A_4) - S(A_1A_3A_4) - S(A_2A_3A_4) \\
 +&S(A_1A_2)+S(A_1A_3)+S(A_1A_4)+S(A_2A_3)+S(A_2A_4)+S(A_3A_4)\\
 -&S(A_1)-S(A_2)-S(A_3)-S(A_4)=0
\end{align*}
The last equality is in fact unconditionally true for system with an exact area law; it is the fourpartite information equality from theorem~\ref{thm:complete}.

\subsection{The search for general quantum entropy inequalities}
The simplistic method of graph combinatorics has thus far managed to reproduce the forms of many familiar entropic inequalities. In light of the observation, one may suspect that it could also be useful in generating other generic $n$-party quantum inequalities. In light of the above speculation, we attempt a simple search for 4-party linear quantum inequalities beyond the \emph{von Neumann entropy inequalities} \eqref{eqn:SA}--\eqref{eqn:SSA}.

To search for the 4-party non-von Neumann inequality candidates, we employ again the notion of an entropy cone \cite{EntCone1,EntCone2}. Let $K=\{A, B, C, D\}$ be the labels of the 4 systems and E be the purifying system. Given a density operator $\rho_{ABCD}$, we obtain the entropy vector $\mathbf{v}=(v_{I})\in\mathbb{R}^{15}$ where $\emptyset\neq I\subseteq K$ and $v_{I} = S(I)_\rho$ denotes the von Neumann entropy of the reduced density matrix $\rho_I$ obtained by tracing out all but the systems in $I$. Consider the set $\Gamma^*_4\subset\mathbb{R}^{15}$ of all entropy vectors $\mathbf{v}$ produced by physical 4-party quantum states, we define the 4-party \emph{quantum entropy cone} as the closure $\overline{\Gamma^{*}_4}$, which is shown to be a a convex cone in the entropy vector space $\mathbb{R}^{15}$. Similarly, the \emph{von Neumann cone} $\Gamma_4$ can be defined as the set of vectors that satisfy the von Neumann entropy inequalities for 4-party systems, i.e., positivity, strong subadditivity, and weak monotonicity. Because these inequalities hold for an arbitrary quantum system, it necessarily follows that $\overline{\Gamma^{*}_4}\subset\Gamma_4$. It was proven that $\overline{\Gamma^{*}_n}=\Gamma_n$ for $n\leq 3$. Therefore von Neumann inequalities completely characterize the quantum entropy cone for 3 or fewer parties.

Due to the convexity of the cones, we know that all points inside an entropy cone can be written as a linear combination of its extremal rays. The entropy cone $\Gamma_4$ produced by all von Neumann entropy inequalities is known and is characterized by the extremal rays listed in table \ref{tb:entropyTable} \cite{IbinsonThesis}.

\begin{table}
\begin{center}
\begin{tabular} {|l|l*{13}{c}r|}
\hline
~ & A & B & C & D & E & AB & AC & AD & AE & BC & BD & BE & CD & CE & DE \\ \hline
 Family 1 & 1 & 1 & 0 & 0 & 0 & 0 & 1 & 1 & 1 & 1 & 1 & 1 & 0 & 0 & 0 \\
 Family 2 & 1 & 1 & 1 & 1 & 0 & 1 & 1 & 1 & 1 & 1 & 1 & 1 & 1 & 1 & 1 \\
 Family 3 & 1 & 1 & 1 & 1 & 0 & 2 & 2 & 2 & 1 & 2 & 2 & 1 & 2 & 1 & 1 \\
 Family 4 & 1 & 1 & 1 & 1 & 1 & 1 & 1 & 1 & 1 & 1 & 1 & 1 & 1 & 1 & 1 \\
 Family 5 & 2 & 1 & 1 & 1 & 1 & 3 & 3 & 3 & 3 & 2 & 2 & 2 & 2 & 2 & 2 \\
 Family 6 & 1 & 1 & 2 & 2 & 2 & 2 & 3 & 3 & 3 & 3 & 3 & 3 & 2 & 2 & 2 \\ \hline
 Family 7 & 3 & 3 & 2 & 2 & 2 & 4 & 3 & 3 & 3 & 3 & 3 & 3 & 3 & 4 & 4 \\
 Family 8 & 3 & 3 & 3 & 3 & 2 & 4 & 4 & 4 & 5 & 4 & 4 & 5 & 6 & 5 & 5 \\
\hline
\end{tabular}

\end{center}
\caption{Families of extremal rays of the 4-party von Neumann cone constructed using known quantum inequalties. For a (mixed) state with subregions A, B, C and D, the region E is the corresponding purifying region.}
\label{tb:entropyTable}
\end{table}

It is shown in \cite{LindenWinter}, however, that families 7 and 8 are not physically constructible. Therefore,
it is suspected that $\overline{\Gamma^{*}_4}$ should be a proper subset of $\Gamma_4$ for $n\geq 4$ \cite{LindenWinter,IbinsonThesis},\footnote{In fact, studies in the classical Shannon entropy reveal that additional entropy inequalities, such as Zhang-Yeung Inequality, are needed in addition to the Shannon-type entropies.} so that additional 4-party entropy inequalities may be needed to complete the entropy cone.\footnote{There is also the possibility of a characterization in terms of non-linear inequalities. However, that is beyond the scope of this work.}
By searching through the integral linear combinations of the constrained entropy equalities constructed from graphs as described in section~\ref{sec:graph} above, we have generated a set of inequalities that satisfy families 1 through 6 but can violate families 7 and 8 for certain permutations.
After a cursory search, we found two such candidate inequalities (up to permutations):
\begin{equation}
\label{eqn:cd1}
\begin{aligned}
&S(ABD)+S(ABC)+S(CD)-S(BC)-S(AB)-S(BD)-S(AC) \\
&-S(AD)+S(B)+S(A)\leq 0
\end{aligned}
\end{equation}
\begin{equation}
\label{eqn:cd2}
\begin{aligned}
&S(ABD)+S(ABC)+S(BCD)-2S(BD)-2S(BC)+S(CD)-S(AD) \\
&-S(AC)-S(AB)+2S(B)+S(A)\leq 0
\end{aligned}
\end{equation}
Both inequalities, as well as the quantum analogue of the Zhang-Yeung inequality \cite{ZhangYeung,SixClassicalIneq},
\[ I(A:B)+I(A:CD)+3I(C:D|A)+I(C:D|B)-2I(C:D)\geq 0, \]
also satisfy the candidate extremal ray in table \ref{tb:candidateray} for the 4-party quantum entropy cone $\overline{\Gamma^{*}_4}$. We note that the graph construction, as currently formulated, cannot produce the Zhang-Yeung inequality.

\begin{table}
\begin{center}
\begin{tabular} {|l|l|l|l|l|l|l|l|l|l|l|l|l|c|r|}
\hline
 A & B & C & D & E & AB & AC & AD & AE & BC & BD & BE & CD & CE & DE \\ \hline
 1 & 1 & 2 & 2 & 2 & 2 & 2 & 2 & 2 & 2 & 2 & 2 & 2 & 2 & 2 \\ \hline
\end{tabular}
\end{center}
\caption{A candidate extremal ray for the 4-party quantum entropy cone proposed by \cite{IbinsonThesis}.}
\label{tb:candidateray}
\end{table}

Inequality~\eqref{eqn:cd1} is known as the \emph{Ingleton inequality}. It can also be written as
\[ I(A:B|C)+I(A:B|D)+I(C:D)-I(A:B)\geq 0. \]
It is known that the Ingleton inequality does not hold for general quantum states (not even for classical probability distributions), but that it is a valid inequality for the subclass of stabilizer states \cite{Stabilizer,gross2013stabilizer}.

Inequality~\eqref{eqn:cd2} on the other hand seems to be independent of the other 4-party linear candidates and to the best of our knowledge, has not been tested to a greater extent. All tests we've conducted so far on this inequality return the same result as the quantum analogue of Zhang-Yeung inequality. It will be worthwhile to generate random 5-partite quantum pure states and numerically check if the inequality can be violated. Note that such checks do not constitute a proof. However, it can be useful in finding a counterexample.

\section{Conclusion and future directions}
\label{sec:conclusion}
We here restate our findings:
\begin{enumerate}
\item We have completely characterized the entropy (in)equalities obeyed by systems in which the entanglement entropy satisfies an exact area law.
We find that such an entropy inequality is valid if and only if it is valid for the entropies of Bell pairs shared between arbitrary subsystems.
In particular, all holographic entropy inequalities, such as the cyclic inequalities established recently in \cite{BNOSSW}, are satisfied by systems with an exact area law.
These (in)equalities may provide constraining tests to determine whether certain condensed matter systems satisfy an area law.
\item The cyclic (in)equalities in two-dimensional systems with non-trivial topological order can be seen as a generalization of \cite{TEEKitaevPreskill} which extracts the topological entanglement entropy using higher number of partitions.
These higher $k$ generalizations of $S_{\text{topo}}$ are sensitive (or insensitive) to different types of deviations from area-law scaling.
\item A graph representation for constrained entropy equalities for systems with an exact area-law scaling is found. As this construction recovers a wide class of entropy equalities including strong subadditivity, it may be suspected that further quantum inequalities may also be found in the set of graph-generated equalities. Following this approach, we have found a candidate linear entropy inequality for general 4-party quantum states.
\end{enumerate}

As we have seen, the graph representation of entropies in area-law systems used in section~\ref{sec:entropy cone} offers surprisingly powerful insights.
In the absence of the minimization that appears in holography, several holographic inequalities now hold exactly as equalities for systems satisfying an exact area law, and we may understand the entropy cone spanned by the area law systems as a particular degeneration of the holographic entropy cone \cite{BNOSSW}.The method may also provide useful insight for the long-standing problem of finding linear inequalities for the entropies of general multipartite quantum states. In this regard, we also note that generalizations of the graph-theoretical approach is much desirable. One such generalization will involve constructing different graphs for a quantum state with holographic dual. We suspect that the geometry of AdS or its dual kinematic space can be effectively captured by analyzing generalized graph representations for these states. In particular, machineries developed in spectral graph drawing may be used to recover the emergent geometry for more general states.

\acknowledgments

We thank Sepehr Nezami and Bogdan Stoica for helpful discussions. This research is supported in part
by the Institute for Quantum Information and Matter at Caltech, by the Walter Burke Institute for Theoretical Physics at Caltech,
by DOE grant DE-SC0011632,
by the Gordon and Betty Moore Foundation through Grant 776 to the Caltech Moore Center for Theoretical Cosmology and Physics,
by the Simons Foundation, and by FQXI.
N.B. is supported by the DuBridge postdoctoral fellowship at the Walter Burke Institute for Theoretical Physics.

\bibliographystyle{JHEP}
\bibliography{EntropicIneq}

\end{document}